\newtheorem{lemma}{Lemma}[section]
\newtheorem{proposition}{Proposition}[section]
\newtheorem{corollary}{Corollary}[section]
\newenvironment{proof}[1][Proof]{\begin{trivlist}
\item[\hskip \labelsep {\bfseries #1}]}{\end{trivlist}}
\newcommand{\qed}{\nobreak \ifvmode \relax \else
      \ifdim\lastskip<1.5em \hskip-\lastskip
      \hskip1.5em plus0em minus0.5em \fi \nobreak
      \vrule height0.75em width0.5em depth0.25em\fi}
\begin{document}
%
\title{Random Deployment of Data Collectors \\ for Serving Randomly-Located Sensors}
%
%
%

\author{Taesoo Kwon,~\IEEEmembership{Member,~IEEE}
        and~John~M.~Cioffi,~\IEEEmembership{Fellow,~IEEE}
\thanks{

T. Kwon and J. M. Cioffi are with the Department of Electrical
Engineering, Stanford University, Stanford
CA, 94305, USA (e-mail: tskwon80@stanford.edu, cioffi@stanford.edu)}
}

\maketitle


\begin{abstract}

Recently, wireless communication industries have begun to extend
their services to machine-type communication devices as well as to
user equipments. Such machine-type communication devices as meters
and sensors need intermittent uplink resources to report measured
or sensed data to their serving data collector. It is however hard
to dedicate limited uplink resources to each of them. Thus,
efficient service of a tremendous number of devices with low
activities may consider simple random access as a solution. The
data collectors receiving the measured data from many sensors
simultaneously can successfully decode only signals with
signal-to-interference-plus-noise-ratio (SINR) above a certain
value. The main design issues for this environment become how many
data collectors are needed, how much power sensor nodes transmit
with, and how wireless channels affect the performance. This paper
provides answers to those questions through a stochastic analysis
based on a spatial point process and on simulations.

\end{abstract}

\begin{IEEEkeywords}
M2M, stochastic geometry, spatial reuse, outage probability,
network design, Poisson point process.
\end{IEEEkeywords}

%
\IEEEpeerreviewmaketitle

%
%
%
%

\section{Introduction} \label{SEC_intro}

Wireless personal communication enables ubiquitous exchange of
various data types such as voice, video, photos, and text among
individuals. The emergence of new advanced systems such as the
IEEE 802.11ac \cite{REF_IEEE802_11} and the 3GPP LTE-Advanced
\cite{REF_3GPP} are expected to achieve additional data rates. Of
late, wireless communication industries have begun to discuss
their scenarios serving machine-type communication devices such as
meters/sensors as well as user equipments such as smart phones
\cite{REF_3GPP_M2M_Service}\cite{REF_3GPP_M2M_System}. These
machine-to-machine (M2M) communications have extensive
applications, from monitoring environments to full
electrical/mechanical automation (e.g. smart grid, smart city,
Internet of things), which has been being considered as one of the
most crucial technologies in future
\cite{REF_M2M_Chen}\cite{REF_M2M_Wu}.
The sensor network can also be regarded as a kind of M2M, and
there have been many studies in the form of ad-hoc networks
\cite{REF_Sensor_Akyildiz}\cite{REF_Sensor_Kulkarni}. This paper
only considers the environment with specific data collectors
directly communicating with sensors. This environment is suitable
when sensor nodes support only simple single-hop communication
functionalities and deployment of many data collectors is easy.
This type of M2M communication is similar to cellular
communication systems where the base stations serve user equipment
within their coverage, but it has the unique characteristics
\cite{REF_M2M_Chen}\cite{REF_ETSI_M2M}: there can be a huge number
of devices (e.g. trillions) each of which has only a small amount
of data and a low activity, and their functionalities have to be
simple. These characteristics may require technologies
differentiated from the conventional high data rate human-to-human
(H2H) communications. For example, machine-type devices such as
meters and sensors need uplink resources intermittently for
reporting measured or sensed data to their serving data collector,
but it is hard to dedicate limited uplink resources to each. Thus,
simple random access can be considered as a solution for directly
transmitting measured data or initially requesting uplink
resources. The data collectors that receive many sensors' measured
data simultaneously can successfully decode only signals with
signal-to-interference-plus-noise ratio (SINR) above a certain
value. In order to keep a high success probability of many sensor
nodes' intermittent transmissions, the system may need a lot of
data collectors, and conventional macro/micro base stations may
not be appropriate for these roles. In other words, data
collectors have to be easy to deploy and cost-effective. They
support only simple functionalities and are interconnected with
external networks through wired or wireless links. It can be
considered that not only a new type of device for data collection
is defined but also such devices as pico/femto base stations
around sensor nodes play the role of data collectors.
Fig.~\ref{FIG_System_Model} shows a system architecture with data
collectors and sensor nodes. In this environment, some questions
are: How many data collectors are needed? How much transmit power
sensors have to use for successful transmission? And, how the
wireless channels affect the performance. This paper will provide
answers to those questions through a stochastic analysis based on
a spatial point process and on simulations.

The main factor of determining system performance is the
interference from neighbor sensor nodes. This interference depends
on the spatial distribution and sensor-node access methods.
Because the spatial configurations of transmitting and receiving
nodes can have enormous possibilities, it is impossible to
consider each possibility. Stochastic geometry provides a useful
mathematical tool to model network topology, and it also enables
analysis of essential quantities such as interference distribution
and outage
\cite{REF_StochasticGeometry_Haenggi}\cite{REF_StochasticGeometryNow_Baccelli}\cite{REF_SpatialModelSurvey_Andrews}.
This stochastic geometry has mainly been applied to pure ad hoc
networks and their performance has been analyzed under the
assumption of random transmitter location and receiver with fixed
distances to its transmitter
\cite{REF_StochasticGeometry_Haenggi}\cite{REF_AdhocExp_Andrews}\cite{REF_OppAloha_Baccelli}.
This paper considers the environment where both transmitters
(sensor nodes) and receivers (data collectors) are randomly
deployed and each transmitter are served by the data collector
nearest to it.
\cite{REF_CellOld1_Baccelli}-\cite{REF_CellNakagami_Andrews} have
analyzed the distribution of signal-to-interference ratio (SIR) or
SINR in random cellular networks where both transmitter and
receiver are randomly located;
\cite{REF_ShotgunCellular_Madhusudhanan} analyzed the distribution
of SIR considering the path loss and shadowing,
\cite{REF_CellRayleigh_Andrews} derived a simple-form SINR
distribution in case of Rayleigh fading and a path-loss exponent
of four, and \cite{REF_CellNakagami_Andrews} expanded the analysis
results in \cite{REF_CellRayleigh_Andrews} into the results for a
more general fading model including Nakagami-$m$ fading. But, they
assumed that each base station always has the user equipment
within its coverage and communicates with a user equipment that is
scheduled exclusively within one cell and focused on
transmitter-centric coverage (i.e. downlink).
\cite{REF_UplinkCDMA_Mehta} modeled CDMA uplink interference power
as a log-normal distribution using the moment-matching method.
Also, \cite{REF_UplinkMIMO_Staelin} asymptotically analyzed uplink
spectral efficiency in spatially distributed wireless networks,
where the base stations have multiple antennas, by using
infinite-random-matrix theory and stochastic geometry. The current
system is similar to the uplink cellular systems, but this paper
will only consider random access without any explicit scheduling
for the simple functionalities of sensor nodes and data
collectors.


The three contributions of this paper are: First, an analysis
shows how the channel affects the SIR distribution for
Nakagami-$m$ fading. A simple form on the SINR distribution is
found for some special channel models. Second, an analysis
describes how many data collectors per area are on average
required to meet the outage probability for the given mean number
of sensor nodes per area in case of Rayleigh fading. Third, this
paper suggests a simple design method of the transmit power and
the mean number of data collectors to meet the given outage
probability.

The remainder of this paper is organized as follows:
Section~\ref{SEC_Model} presents the system model based on a
homogeneous Poisson point process (PPP). Section~\ref{SEC_Sinr}
analyzes the SIR distribution for Nakagammi-$m$ fading channels
and the SINR distribution for Rayleigh fading channels.
Section~\ref{SEC_Density} derives the intensity of data collectors
required to keep the outage probability below a certain value and
suggests a design method of the transmit power.
Section~\ref{SEC_Results} discusses numerical results. Finally,
Section~\ref{SEC_Conclusions} concludes.



\section{System Model} \label{SEC_Model}

A sensor node senses or measures environments and then transmits
its data to the closest data collector. Sensor nodes do not always
have data to transmit but send them only when their sensing data
are generated. For example, machines such as meters and event
sensors may transmit data intermittently rather than continuously,
and it has to be successful with probability above a certain
value. In order to model intermittent transmissions, the sensor
node's activity is defined as $\rho$. This value of $\rho$ is
between $0$ and $1$, and this paper considers its small values.
Meanwhile, data collectors that receive data from sensor nodes,
are always ready to receive data from them.

This paper considers environments where both of sensor nodes and
data collectors are randomly deployed. Sensor nodes are
distributed according to a homogeneous PPP, $\Phi_s$, and they
transmit sensed data to their nearest data collectors through
random access schemes. $\lambda_{s,total}$ denotes the intensity
of sensor nodes that is the average number of them per area. In
order to consider unplanned deployments of data collectors, the
random locations of data collectors are modeled as a homogeneous
PPP, $\Phi_c$, with intensity $\lambda_c$, like sensor nodes. Each
sensor node transmits its data to a data collector closest to it,
so a data collector builds a coverage based on Voronoi
tessellation, as shown in Fig~\ref{FIG_VoroniTessellation}.

The standard power loss propagation model with the path loss
exponent $\alpha(>2)$ and the Nakagami-$m$ fading model are
considered. In the Nakagami-$m$ fading model
\cite{REF_NakagamiFading}, $m=1$, $m=(K+1)^2/(2K+1)$ and
$m=\infty$ model Rayleigh fading, Rician fading with parameter
$K$, and no fading, respectively. Also, it is assumed that all
sensor nodes transmit with the same power $P$. A typical data
collector located on the origin receives the signal with
$Pr^{-\alpha}G_S$ from a typical sensor node when the distance
between them is $r$ and the fading channel gain is $G_S$. By
Slyvnyak's theorem \cite{REF_StochasticGeometry_Book}, interfering
nodes except for a typical sensor node located on $X_0$ still
constitute a homogeneous PPP with intensity $\lambda_{s,total}$.
Thus, the interference power of the link between a typical sensor
node and a typical data collector can be expressed as $I_r =
\sum_{X_j \in \Phi_s \backslash \{X_0\}}
P\left|X_j\right|^{-\alpha}G_{I,j}$ where $X_j$ denotes the
location of a interfering node and $G_{I,j}$ means the fading gain
of a link between a typical data collector and a interfering
sensor node $j$. $\left\{G_{I,j}\right\}_{X_j \in \Phi_s
\backslash \{X_0\}}$ are independently and identically distributed
(i.i.d.) random variables. Here, it is assumed that a typical data
collector does not perform any scheduling for sensor nodes within
coverage served by itself, so they may interfere with each other
even though they are served by a common data collector.
Eventually, the link of a transmitter-receiver pair experiences
interference from interfering nodes distributed according to a
homogeneous PPP with effective intensity $\lambda_s =
\lambda_{s,total} \cdot \rho$. If a sensor transmits using one of
$N$ resources that is chosen at random, $\lambda_s$ is
$\lambda_{s,total} \cdot \rho /N$. $\lambda_s$ decreases as $N$
increases and this means that $N$ is also a parameter for the
system design. In this paper, $N=1$ is assumed.

When the interference is dealt with as noise and single antenna is
equipped on both transmitters and receivers, the SINR is given by
\begin{equation}\label{EQN_SinrModel}
    \begin{array}{ll}
        \mathrm{SINR} & = \frac{P\left|X_0\right|^{-\alpha}G_S}
        {\sum_{X_j \in \Phi_s \backslash \{X_0\}}
        P\left|X_j\right|^{-\alpha}G_{I,j} + \sigma^2} \\
        & = \frac{\left|X_0\right|^{-\alpha}G_S}
        {\sum_{X_j \in \Phi_s \backslash \{X_0\}}
        \left|X_j\right|^{-\alpha}G_{I,j} + \tilde{\sigma}^2}
    \end{array}
\end{equation}
where $\sigma^2$ is the noise power and $\tilde{\sigma}^2$ is
equal to $\frac{\sigma^2}{P}$. In case of $\sigma^2 \rightarrow
0$, (\ref{EQN_SinrModel}) means the SIR.

\section{SINR Distribution} \label{SEC_Sinr}

This section analyzes the SINR distributions and derive
simple-form SIR or SINR distribution for some specific channels.
For more generalization of results, the fading gain distribution
of (\ref{EQN_GeneralFading}) is first considered.
\begin{equation}\label{EQN_GeneralFading}
    \Pr\left\{G_S>g\right\}=\sum_{n \in \mathcal{N}}
    \exp(-ng) \sum_{k \in \mathcal{K}} a_{nk}g^k
\end{equation}
for some finite set $\mathcal{N}$ and a finite integer set
$\mathcal{K}$. This type of complementary cumulative distribution
function (CCDF) includes a variety of fading-gain distributions
such as exponential distribution, chi-square distribution and
gamma distribution.

\begin{lemma}\label{LEM_Sinr_GneralFading}

    \emph{Let sensor nodes and data collectors distributed with
    homogeneous PPP's with intensities $\lambda_s$ and $\lambda_c$,
    respectively and each sensor node builds communication link
    with a data collectors closest to it. When the CCDF of the fading gain of a desired signal
    is given by (\ref{EQN_GeneralFading}) and the fading gain of the interfering signal is denoted as a random variable,
    $G_{I}$, the CCDF of SINR is given by}

    \begin{equation}\label{EQN_Sinr_GeneralFading}
        \begin{array}{ll}
            \Pr\left\{{\rm SINR} > \beta \right\} \\ =
            2\pi \lambda_c \sum_{n \in \mathcal{N}}\sum_{k \in
            \mathcal{K}} a_{nk} \left(-\beta \right)^k  \\
            \int_{0}^{\infty} r^{k \alpha+1}
            \left.\frac{d^k \exp\left(-\lambda_s \xi(\zeta,\alpha)-\zeta \tilde{\sigma}^2\right)}
            {d\zeta^k} \right|_{\zeta=n \beta r^\alpha} \exp\left(-\lambda_c \pi r^2 \right) dr
        \end{array}
    \end{equation}
    \emph{where $\xi (\zeta,\alpha) = \pi \zeta^{\frac{2}{\alpha}}
    \Gamma\left(1-\frac{2}{\alpha}\right)
    \mathrm{E}\{G_I^{\frac{2}{\alpha}}\}$, $\mathrm{E}\{x\}$ is the expectation
    of $x$ and $\Gamma(x)=\int_{0}^{\infty} t^{x-1}\exp(-t)dt$ denotes
    the gamma function. The derivative in (\ref{EQN_Sinr_GeneralFading}) can be reexpressed as follow.}

    \begin{equation}\label{EQN_Sinr_GeneralFading_Derivative}
        \begin{array}{ll}
            \frac{d^k \exp\left(-\lambda_s \xi(\zeta,\alpha)-\zeta \tilde{\sigma}^2\right)}
            {d\zeta^k} = \\
            \exp \left(-\lambda_s \xi(\zeta,\alpha) - \zeta \tilde{\sigma}^2\right) \sum_{l=0}^{k}
            \frac{1}{l!}\sum_{j=0}^{l}(-1)^{l+j}\binom{l}{j} \\
            \hspace{1cm}\left[\lambda_s\xi(\zeta,\alpha)+\zeta \tilde{\sigma}^2
            \right]^j \frac{\partial^k}{\partial \zeta^k} \left[\lambda_s \xi(\zeta,\alpha)+\zeta
            \tilde{\sigma}^2 \right]^{l-j}
        \end{array}
    \end{equation}

\end{lemma}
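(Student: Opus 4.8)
The plan is to condition on the link distance and reduce the tail probability to a Laplace-transform computation. Place the typical data collector at the origin; then the serving sensor $X_0$ sits at distance $r=|X_0|$, the distance from the typical sensor to its nearest collector. Because $\Phi_c$ is a homogeneous PPP of intensity $\lambda_c$ independent of $\Phi_s$, the void probability gives $\Pr\{r>\rho\}=\exp(-\lambda_c\pi\rho^2)$, so $r$ has density $f(r)=2\pi\lambda_c r\exp(-\lambda_c\pi r^2)$. Conditioned on $r$, (\ref{EQN_SinrModel}) shows that $\{\mathrm{SINR}>\beta\}$ is exactly $\{G_S>\beta r^\alpha(I_r+\tilde{\sigma}^2)\}$ with $I_r=\sum_{X_j\in\Phi_s\setminus\{X_0\}}|X_j|^{-\alpha}G_{I,j}$. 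Substituting $g=\beta r^\alpha(I_r+\tilde{\sigma}^2)$ into the desired-signal CCDF (\ref{EQN_GeneralFading}) and averaging over the interferers yields
\[\Pr\{\mathrm{SINR}>\beta\mid r\}=\sum_{n\in\mathcal{N}}\sum_{k\in\mathcal{K}}a_{nk}(\beta r^\alpha)^k\,\mathrm{E}\big[(I_r+\tilde{\sigma}^2)^k\exp\big(-n\beta r^\alpha(I_r+\tilde{\sigma}^2)\big)\big].\]

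The key step is to read each interference expectation as a derivative of a Laplace transform. Setting $\psi(\zeta):=\mathrm{E}[\exp(-\zeta(I_r+\tilde{\sigma}^2))]=\exp(-\zeta\tilde{\sigma}^2)\,\mathrm{E}[\exp(-\zeta I_r)]$ and differentiating under the expectation gives $\mathrm{E}[(I_r+\tilde{\sigma}^2)^k\exp(-\zeta(I_r+\tilde{\sigma}^2))]=(-1)^k\,d^k\psi/d\zeta^k$, so the $(n,k)$ summand becomes $a_{nk}(-\beta)^k r^{k\alpha}[d^k\psi/d\zeta^k]_{\zeta=n\beta r^\alpha}$. To evaluate $\psi$ I would invoke Slivnyak's theorem: deleting $X_0$ leaves the interferers as a homogeneous PPP of (activity-scaled) intensity $\lambda_s$, independent of $r$, so the probability generating functional gives $\mathrm{E}[\exp(-\zeta I_r)]=\exp\big(-\lambda_s\int_0^\infty 2\pi\rho(1-\mathrm{E}_{G_I}[\exp(-\zeta\rho^{-\alpha}G_I)])\,d\rho\big)$. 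The substitution $u=\zeta G_I\rho^{-\alpha}$ collapses the radial integral to $\xi(\zeta,\alpha)=\pi\zeta^{2/\alpha}\Gamma(1-2/\alpha)\mathrm{E}\{G_I^{2/\alpha}\}$, the convergence of $\Gamma(1-2/\alpha)$ being precisely where $\alpha>2$ is used. Hence $\psi(\zeta)=\exp(-\lambda_s\xi(\zeta,\alpha)-\zeta\tilde{\sigma}^2)$, and multiplying the conditional probability by $f(r)$ and integrating $r$ over $(0,\infty)$ reproduces (\ref{EQN_Sinr_GeneralFading}) term by term.

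It remains to justify the derivative expansion (\ref{EQN_Sinr_GeneralFading_Derivative}), a purely analytic identity for $d^k\exp(-f)/d\zeta^k$ with $f(\zeta)=\lambda_s\xi(\zeta,\alpha)+\zeta\tilde{\sigma}^2$ that carries no probabilistic content. I would derive it from Fa\`{a} di Bruno's formula, $d^k\exp(-f)/d\zeta^k=\exp(-f)\sum_{l=0}^k(-1)^l B_{k,l}$, combined with the representation $B_{k,l}=\tfrac{1}{l!}\sum_{j=0}^l(-1)^j\binom{l}{j}f^j\,d^k(f^{l-j})/d\zeta^k$ of the partial Bell polynomials. A more self-contained route is to multiply the claimed right-hand side by $\exp(f)$ and reindex with $m=j$, $p=l-j$, turning the double sum into $\sum_{m+p\le k}\frac{f^m}{m!}\frac{(-1)^p}{p!}d^k(f^p)/d\zeta^k$; expanding $d^k(f^p)/d\zeta^k$ by the power-derivative formula, the inner alternating binomial sums vanish through $(1-1)^s=0$ for $s\ge1$, leaving exactly $\exp(f)\,d^k\exp(-f)/d\zeta^k$ (the same identity also follows by induction on $k$). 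The stochastic-geometry steps are routine once the Laplace reformulation is set up; the only genuine obstacles are bookkeeping ones, namely ensuring via the independence of $\Phi_s$ and $\Phi_c$ that removing $X_0$ introduces no exclusion region around the collector, and organizing the combinatorics of (\ref{EQN_Sinr_GeneralFading_Derivative}).
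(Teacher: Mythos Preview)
Your proposal is correct and follows essentially the same route as the paper: condition on the nearest-collector distance $r$ with density $2\pi\lambda_c r\exp(-\lambda_c\pi r^2)$, rewrite the conditional tail via the Laplace-transform derivative identity, evaluate $\mathcal{L}_{I_r}$ through the PGFL of the PPP to obtain $\exp(-\lambda_s\xi(\zeta,\alpha))$, and then integrate in $r$. The only cosmetic difference is that the paper states the composite-exponential derivative identity (\ref{EQN_Sinr_GeneralFading_Derivative}) as a direct consequence of the chain rule, whereas you frame it through Fa\`a di Bruno and partial Bell polynomials; these are the same computation.
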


\begin{proof}
See Appendix~\ref{APP_LEM_Sinr_GneralFading}.
\end{proof}

The result of SINR distribution in
Lemma~\ref{LEM_Sinr_GneralFading} requires cumbersome integrations
and differentiations, but simple-form result can be obtained for
specific channel models.

To begin with, an analysis considers Nakagami-$m$ fading channel.
The received signal power experiencing Nakagami-$m$ fading channel
can be modeled using Gamma distributions. Thus, assuming that
desired signals experience Nakagami-$m_s$ fading while interfering
signals experience Nakagami-$m_i$ fading, the CCDF of fading gains
can be give by
\begin{equation}\label{EQN_NakagamiFading_S}
    \Pr\left\{G_S>g\right\}=\sum_{k=0}^{m_s-1} \frac{(m_s g)^k}{k!}
    \exp(-m_s g)
\end{equation}
\begin{equation}\label{EQN_NakagamiFading_I}
    \Pr\left\{G_I>g\right\}=\sum_{k=0}^{m_i-1} \frac{(m_i g)^k}{k!}
    \exp(-m_i g)
\end{equation}
(\ref{EQN_NakagamiFading_S}) and (\ref{EQN_NakagamiFading_I}) have
the forms of (\ref{EQN_GeneralFading}), so the SINR distribution
can be derived by using Lemma~\ref{LEM_Sinr_GneralFading}.
Generally, Lemma~\ref{LEM_Sinr_GneralFading} requires the
calculation of a derivative in
(\ref{EQN_Sinr_GeneralFading_Derivative}) and it is too complex to
calculate it for any $m_s$, $\alpha$ and $\tilde{\sigma}$.
Fortunately, it is possible to obtain a simple form for the CCDF
of SINR under interference limited environments, i.e. $\sigma^2
\rightarrow 0$.

\begin{proposition}\label{PRO_Sir_Nakagami}

    \emph{Let sensor nodes be randomly located with intensity
    $\lambda_s$ and served by the nearest data collectors randomly deployed with
    intensity intensity $\lambda_c$. When their links experience Nakagami-$m$ fading given
    by (\ref{EQN_NakagamiFading_S}) and (\ref{EQN_NakagamiFading_I}),
    and $\tilde{\sigma}\rightarrow 0$, the CCDF of SIR
    is given by}

    \begin{equation}\label{EQN_Sir_Nakagami}
        \begin{array}{ll}
            \Pr\left\{{\rm SIR} > \beta \right\}  =
            \frac{\lambda_c}{\lambda_c + \lambda_s C(m_i, \alpha) (m_s\beta)^{\frac{2}{\alpha}}
            } \cdot
            \\ \sum_{k=0}^{m_s-1} \frac{1}{k!} \sum_{l=0}^{k}
            (-1)^{l+k} \Delta_{k,l}
            \left[\frac{\lambda_s C(m_i, \alpha)
            (m_s\beta)^{\frac{2}{\alpha}}}{\lambda_c + \lambda_s C(m_i, \alpha)
            (m_s\beta)^{\frac{2}{\alpha}}}
            \right]^l
        \end{array}
    \end{equation}

    \emph{where
    $C(m,\alpha)= \frac{m^{-\frac{2}{\alpha}} \Gamma\left(1-\frac{2}{\alpha}\right)
    \Gamma\left(m+\frac{2}{\alpha}\right)}{\Gamma\left(m\right)}$ and
    $\Delta_{k,l} = \sum_{j=0}^{l}(-1)^j \binom{l}{j} \prod_{i=0}^{k-1} \left[\frac{2}{\alpha}(l-j)-i\right]$ for $l \leq k$.
    Here, $\Delta_{0,0}$ is defined as $1$.}

\end{proposition}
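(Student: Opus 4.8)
The plan is to specialize Lemma~\ref{LEM_Sinr_GneralFading} to the Nakagami case and then evaluate the resulting radial integral in closed form. First I would match the desired-signal CCDF (\ref{EQN_NakagamiFading_S}) to the general template (\ref{EQN_GeneralFading}): the exponential-sum index set collapses to the single element $\mathcal{N}=\{m_s\}$ with coefficients $a_{m_s,k}=m_s^k/k!$ for $k=0,\dots,m_s-1$. Next I would compute the interference moment $\mathrm{E}\{G_I^{2/\alpha}\}$ for the Nakagami-$m_i$ gain, which is a unit-mean Gamma variable of shape $m_i$; the integral $\int_0^\infty g^{m_i+2/\alpha-1}e^{-m_i g}\,dg$ yields $\mathrm{E}\{G_I^{2/\alpha}\}=\Gamma(m_i+2/\alpha)/(m_i^{2/\alpha}\Gamma(m_i))$, so that $\xi(\zeta,\alpha)=\pi C(m_i,\alpha)\,\zeta^{2/\alpha}$ reproduces exactly the constant $C(m_i,\alpha)$ in the statement.

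With $\tilde{\sigma}\rightarrow 0$ the argument of the exponential in (\ref{EQN_Sinr_GeneralFading_Derivative}) is the pure power law $\lambda_s\xi=c\,\zeta^{2/\alpha}$ with $c:=\pi\lambda_s C(m_i,\alpha)$, and I would feed this into the derivative expansion. Since $[\lambda_s\xi]^j\,\partial_\zeta^k[\lambda_s\xi]^{l-j}=c^l\,\zeta^{2l/\alpha-k}\prod_{i=0}^{k-1}[\tfrac{2}{\alpha}(l-j)-i]$, the inner sum over $j$ produces precisely $(-1)^l\Delta_{k,l}$, giving
\[
\frac{d^k e^{-c\zeta^{2/\alpha}}}{d\zeta^k}=e^{-c\zeta^{2/\alpha}}\sum_{l=0}^{k}\frac{(-c)^l\Delta_{k,l}}{l!}\,\zeta^{2l/\alpha-k}.
\]
This is the key structural step: recognizing that the alternating binomial sum of falling-factorial products collapses into the defined symbol $\Delta_{k,l}$.

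I would then set $\zeta=m_s\beta r^\alpha$, so that $\zeta^{2/\alpha}=(m_s\beta)^{2/\alpha}r^2$ and $\zeta^{2l/\alpha-k}=(m_s\beta)^{2l/\alpha-k}r^{2l-k\alpha}$, and substitute into (\ref{EQN_Sinr_GeneralFading}). The prefactor $a_{m_s,k}(-\beta)^k$ together with the emerging $(m_s\beta)^{-k}$ factor collapses to $(-1)^k/k!$, while $r^{k\alpha+1}\cdot r^{2l-k\alpha}=r^{2l+1}$; writing $b:=c(m_s\beta)^{2/\alpha}$, the two Gaussian weights merge into $\exp(-(b+\pi\lambda_c)r^2)$. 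The remaining radial integral is the standard moment $\int_0^\infty r^{2l+1}e^{-a r^2}\,dr=l!/(2a^{l+1})$ with $a=b+\pi\lambda_c$, whose $l!$ cancels the $1/l!$ and leaves the factor $b^l/(b+\pi\lambda_c)^{l+1}$.

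Finally I would factor out $1/(b+\pi\lambda_c)$ to expose the geometric ratio $b/(b+\pi\lambda_c)$. Because $b+\pi\lambda_c=\pi\,[\lambda_c+\lambda_s C(m_i,\alpha)(m_s\beta)^{2/\alpha}]$, the leading constant $2\pi\lambda_c$ reduces to $\lambda_c/(\lambda_c+\lambda_s C(m_i,\alpha)(m_s\beta)^{2/\alpha})$ and the ratio becomes exactly the bracketed term in (\ref{EQN_Sir_Nakagami}), with the surviving sign $(-1)^k(-1)^l=(-1)^{l+k}$; this is the claimed closed form. I expect the only genuine obstacle to be the sign and factorial bookkeeping in the derivative expansion---in particular verifying that the $j$-sum reproduces $\Delta_{k,l}$ and that the spurious powers of $m_s$, $\beta$, and $r$ all cancel---while the integration itself is routine.
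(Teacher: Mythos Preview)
Your proposal is correct and follows essentially the same route as the paper's proof: specialize Lemma~\ref{LEM_Sinr_GneralFading} with $\mathcal{N}=\{m_s\}$ and $a_{m_s,k}=m_s^k/k!$, compute $\mathrm{E}\{G_I^{2/\alpha}\}$ for the Gamma-$m_i$ interferer to obtain $\xi(\zeta,\alpha)=\pi C(m_i,\alpha)\zeta^{2/\alpha}$, evaluate the derivative expansion (\ref{EQN_Sinr_GeneralFading_Derivative}) for the pure power law to produce the $\Delta_{k,l}$ structure, and then reduce the radial integral to $\int_0^\infty r^{2l+1}e^{-ar^2}\,dr=l!/(2a^{l+1})$. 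The cancellations you flag---$(m_s\beta)^{-k}$ against the prefactor, $r^{k\alpha}$ against $r^{-k\alpha}$, and $l!$ against $1/l!$---are exactly those that appear in the paper's Appendix~\ref{APP_PRO_Sir_Nakagami}.
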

\begin{proof}
    See Appendix~\ref{APP_PRO_Sir_Nakagami}.
\end{proof}

In Nakagami-$m$ fading model, $m=1$ means the Rayleigh fading
model. Thus, it is also easy to obtain the CCDF of the SIR for
Rayleigh fading model.

\begin{corollary}\label{COR_Sir_Rayleigh}

    \emph{Let sensor nodes be randomly located with intensity
    $\lambda_s$ and served by the nearest data collector randomly deployed with
    intensity intensity $\lambda_c$. When all links experience Rayleigh fading
    with unit mean, and $\tilde{\sigma}\rightarrow 0$, the CCDF of SIR
    is given by}

    \begin{equation}\label{EQN_Sir_Rayleigh}
        \begin{array}{ll}
            \Pr\left\{{\rm SIR} > \beta \right\}  = \frac{\lambda_c}{\lambda_c + \lambda_s C(1,\alpha) \beta^{\frac{2}{\alpha}}}
        \end{array}
    \end{equation}

    \emph{where
    $C(1,\alpha)=\Gamma\left(1-\frac{2}{\alpha}\right)\Gamma\left(1+\frac{2}{\alpha}\right)=\frac{2\pi}{\alpha\sin(2\pi/\alpha)}$.}

\end{corollary}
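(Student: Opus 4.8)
The plan is to obtain the corollary as the $m=1$ specialization of Proposition~\ref{PRO_Sir_Nakagami}, since Rayleigh fading is precisely the Nakagami-$m$ model with $m=1$. First I would set $m_s = m_i = 1$ in (\ref{EQN_Sir_Nakagami}). The key simplification comes from the outer sum: with $m_s = 1$ the index $k$ ranges only over $k=0$, and the inner sum over $l$ then contributes only the $l=0$ term. Using the stated convention $\Delta_{0,0}=1$, this single surviving term equals $(-1)^{0}\Delta_{0,0}[\cdots]^{0} = 1$, so the entire double sum collapses to unity.

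What remains is the prefactor $\lambda_c / (\lambda_c + \lambda_s C(1,\alpha)\beta^{2/\alpha})$, where the exponent base $(m_s\beta)^{2/\alpha}$ has become $\beta^{2/\alpha}$ because $m_s=1$. This already has the form claimed in (\ref{EQN_Sir_Rayleigh}), so the only task left is to evaluate the constant $C(1,\alpha)$.

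Next I would simplify $C(1,\alpha)$ from its definition. Substituting $m=1$ into $C(m,\alpha)=m^{-2/\alpha}\Gamma(1-2/\alpha)\Gamma(m+2/\alpha)/\Gamma(m)$ and using $\Gamma(1)=1$ gives $C(1,\alpha)=\Gamma(1-2/\alpha)\Gamma(1+2/\alpha)$, matching the first expression in the corollary. To reach the closed form, I would write $\Gamma(1+2/\alpha)=(2/\alpha)\Gamma(2/\alpha)$ and then apply Euler's reflection formula $\Gamma(z)\Gamma(1-z)=\pi/\sin(\pi z)$ with $z=2/\alpha$, which yields $\Gamma(1-2/\alpha)\Gamma(2/\alpha)=\pi/\sin(2\pi/\alpha)$ and hence $C(1,\alpha)=2\pi/(\alpha\sin(2\pi/\alpha))$.

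There is no genuine obstacle here: this is a corollary in the literal sense, and the whole argument is a substitution followed by a single application of the reflection formula. The only point deserving care is confirming that the double sum truly degenerates to $1$ rather than to some other constant, which hinges on the boundary convention $\Delta_{0,0}=1$ supplied in the proposition.
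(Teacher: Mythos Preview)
Your proposal is correct and follows essentially the same approach as the paper: substitute $m_s=m_i=1$ into Proposition~\ref{PRO_Sir_Nakagami} so that the double sum collapses to $1$, and then evaluate $C(1,\alpha)$ via the reflection formula $\Gamma(1-z)\Gamma(z)=\pi/\sin(\pi z)$. The only difference is that you spell out the intermediate step $\Gamma(1+2/\alpha)=(2/\alpha)\Gamma(2/\alpha)$ explicitly, which the paper leaves implicit.
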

\begin{proof}
    By substituting $m_s=1$ and $m_i=1$ into the results in
    Proposition~\ref{PRO_Sir_Nakagami}, (\ref{EQN_Sir_Rayleigh}) is obtained. Also, $C(1,\alpha)$ can be calculated by using the property of the gamma
    function $\Gamma(1-z)\Gamma(z)=\frac{\pi}{\sin(\pi z)}$.
\end{proof}

As discussed before, when the noise power cannot be neglected, it
is hard to obtain a simple form of the SINR for general $m_s$
because it requires the derivative of
(\ref{EQN_Sinr_GeneralFading_Derivative}). However, when a path
loss exponent, $\alpha$, is equal to $4$ and the fading channel is
modeled as Rayleigh fading, the CCDF of SINR is simplified into a
common integral form.

\begin{proposition}\label{PRO_Sinr_Rayleigh}

    \emph{Let sensor nodes be randomly located with intensity
    $\lambda_s$ and served by the nearest data collector randomly deployed with
    intensity intensity $\lambda_c$. when all links experience Rayleigh fading
    with unit mean and a path loss exponent $\alpha$ is $4$, the CCDF of
    SINR is given by}

    \begin{equation}\label{EQN_Sinr_Rayleigh}
        \begin{array}{ll}
            \Pr\left\{{\rm SINR} > \beta \right\} \\ =
            \frac{\pi^{\frac{3}{2}}\lambda_c}{2\sqrt{\beta \tilde{\sigma}^2}}
            \exp\left( \frac{[\pi \lambda_c + K \beta^{\frac{1}{2}} \lambda_s ]^2}{4 \beta \tilde{\sigma}^2} \right)
            {\rm erfc}\left(\frac{\pi \lambda_c + K \beta^{\frac{1}{2}} \lambda_s}{2 \sqrt{\beta \tilde{\sigma}^2}}\right)
        \end{array}
    \end{equation}

    \emph{where
    $K=\frac{\pi^2}{2}$ and
    ${\rm erfc}(x) = \frac{2}{\sqrt{\pi}}\int_{x}^{\infty}
    \exp\left(-t^2\right)dt$ is the complementary error
    function.}

\end{proposition}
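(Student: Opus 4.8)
The plan is to specialize Lemma~\ref{LEM_Sinr_GneralFading} to the Rayleigh case and then evaluate the single remaining integral in closed form. Rayleigh fading is Nakagami-$m$ with $m=1$, so the desired-signal CCDF is $\Pr\{G_S>g\}=\exp(-g)$, which matches (\ref{EQN_GeneralFading}) with $\mathcal{N}=\{1\}$, $\mathcal{K}=\{0\}$, and $a_{10}=1$. Since $k=0$ is the only index, the $k$-th derivative in (\ref{EQN_Sinr_GeneralFading}) is just the integrand itself, the double sum collapses, and no differentiation via (\ref{EQN_Sinr_GeneralFading_Derivative}) is needed. This removes the complexity that plagues general $m_s$ and leaves a single radial integral.

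First I would compute $\xi(\zeta,\alpha)$ at $\alpha=4$. Using $\zeta^{2/\alpha}=\zeta^{1/2}$, $\Gamma(1-2/\alpha)=\Gamma(1/2)=\sqrt{\pi}$, and—because the interfering gain $G_I$ is unit-mean exponential—$\mathrm{E}\{G_I^{1/2}\}=\Gamma(3/2)=\sqrt{\pi}/2$, I obtain $\xi(\zeta,4)=\frac{\pi^2}{2}\zeta^{1/2}=K\zeta^{1/2}$ with $K=\pi^2/2$ as claimed. Substituting $\zeta=\beta r^4$ gives $\lambda_s\xi=K\beta^{1/2}\lambda_s\,r^2$ and $\zeta\tilde{\sigma}^2=\beta\tilde{\sigma}^2\,r^4$, so Lemma~\ref{LEM_Sinr_GneralFading} reduces to
\begin{equation*}
\Pr\{\mathrm{SINR}>\beta\}=2\pi\lambda_c\int_0^\infty r\,\exp\!\left(-\beta\tilde{\sigma}^2 r^4-\left[\pi\lambda_c+K\beta^{1/2}\lambda_s\right]r^2\right)dr.
\end{equation*}

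Next I would evaluate this integral. The substitution $u=r^2$ turns it into $\pi\lambda_c\int_0^\infty\exp(-au^2-bu)\,du$ with $a=\beta\tilde{\sigma}^2$ and $b=\pi\lambda_c+K\beta^{1/2}\lambda_s$. Completing the square, $-au^2-bu=-a(u+\tfrac{b}{2a})^2+\tfrac{b^2}{4a}$, and the shifted Gaussian tail from $u=0$ to $\infty$ is exactly $\frac{\sqrt{\pi}}{2\sqrt{a}}\exp(\tfrac{b^2}{4a})\,\mathrm{erfc}(\tfrac{b}{2\sqrt{a}})$ by the definition of $\mathrm{erfc}$. Collecting constants, $\pi\lambda_c\cdot\frac{\sqrt{\pi}}{2\sqrt{a}}=\frac{\pi^{3/2}\lambda_c}{2\sqrt{\beta\tilde{\sigma}^2}}$, and reinserting $a$ and $b$ yields (\ref{EQN_Sinr_Rayleigh}).

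The argument is essentially routine once the lemma is in hand, so there is no deep obstacle. The only places demanding care are the correct evaluation of the interference moment $\mathrm{E}\{G_I^{1/2}\}=\Gamma(3/2)$, which produces the constant $K=\pi^2/2$, and the bookkeeping in completing the square so that the lower limit of the Gaussian integral lands precisely at $b/(2\sqrt{a})$, giving the $\mathrm{erfc}$ with the stated argument.
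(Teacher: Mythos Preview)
Your proposal is correct and follows essentially the same route as the paper's proof: specialize Lemma~\ref{LEM_Sinr_GneralFading} to $\mathcal{N}=\{1\}$, $\mathcal{K}=\{0\}$, $a_{10}=1$, compute $\xi(\zeta,4)=\tfrac{\pi^2}{2}\zeta^{1/2}$, and evaluate the resulting integral via $r^2\to u$ and the standard Gaussian-tail/complementary-error-function identity. The only cosmetic difference is that the paper obtains $K$ by quoting $\xi(\zeta,\alpha)=\pi\zeta^{2/\alpha}C(m_i,\alpha)$ from the Nakagami computation and then evaluating $K=\pi C(1,4)=\pi^2/2$, whereas you compute the moment $\mathrm{E}\{G_I^{1/2}\}=\Gamma(3/2)$ directly; these are the same calculation.
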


\begin{proof}
    The CCDF of fading gain for $m_s=1$
    is the case of $\mathcal{N}=\left\{1\right\}$, $\mathcal{K}=\left\{0\right\}$ and
    $a_{10}=1$ in (\ref{EQN_GeneralFading}). Thus, when $\alpha=4$
    \begin{equation}\label{EQN_Sinr_Rayleigh_Proof}
        \begin{array}{ll}
            \Pr\left\{{\rm SINR}>\beta\right\}  \\
            = 2 \pi \lambda_c
            \int_{0}^{\infty} r \cdot \exp \left(- \pi r^2 \left[\lambda_s \beta^{\frac{1}{2}} C(1,4) +
            \lambda_c\right]-\beta r^{4} \tilde{\sigma}^2\right) dr
        \end{array}
    \end{equation}
    (\ref{EQN_Sinr_Rayleigh_Proof}) follows from (\ref{EQN_Sinr_GeneralFading}) and
    (\ref{EQN_Xi_Sir_Nakagami}), and it can be evaluated by using the change of variables $r^2 \rightarrow x$ and
    the integration formula, $\int_{0}^{\infty}
    \exp\left(-\left[ax+bx^2\right]\right)dx = \frac{1}{2 \sqrt{b}}
    \exp\left(\frac{a^2}{4b}\right){\rm erfc}\left(\frac{a}{2 \sqrt{b}}\right)$ for $a\geq0$ and
    $b>0$. Here, $K=\pi C(1,4) = \frac{\pi^2}{2}$.
\end{proof}

Proposition~\ref{PRO_Sinr_Rayleigh} is the result for $\alpha=4$.
When $\alpha$ is not $4$, the CCDF of SINR can be expressed by
generalized hypergeometric functions. But they are not simple, so
this paper does not deal with them.

\section{Intensity of Data Collectors} \label{SEC_Density}

When sensor nodes are spatially distributed according to a
homogeneous PPP with a certain intensity, it is important to
decide how many data collectors should be deployed in order to
keep the success probability of random accesses above a certain
value. This section analyzes the requirement of the intensity of
data collectors deployed at random, and the effect of channels on
its required intensity, given the intensity of sensor nodes and a
target outage probability. The outage probability, $\varepsilon$,
is defined as $\Pr\{{\rm SINR < \beta_t}\}$ where $\beta_t$ is the
minimal SINR value required for the successful receptions.

The required intensity of data collectors for Rayleigh fading is
presented in Corollary~\ref{COR_Density_Rayleigh_Sir} and
Proposition~\ref{PRO_Density_Rayleigh_Sinr}.

\begin{corollary}\label{COR_Density_Rayleigh_Sir}

    \emph{Let sensor nodes randomly located with intensity $\lambda_s$
    and served by the nearest data collectors. It is assumed that all
    links experience Rayleigh fading with unit mean and
    $\tilde{\sigma}\rightarrow 0$. The necessary and sufficient condition of the intensity of data
    collectors randomly deployed, $\lambda_c$, for keeping the outage
    probability below $\varepsilon_t$, is}
    \begin{equation}\label{EQN_Density_Rayleigh_Sir}
        \begin{array}{ll}
            \lambda_c \geq \frac{1}{\varepsilon_t}(1-\varepsilon_t)C(1,\alpha)\beta_t^{\frac{2}{\alpha}}\lambda_s
        \end{array}
    \end{equation}
    \emph{where $C(1,\alpha)$ is defined in
    Corollary~\ref{COR_Sir_Rayleigh}.}

\end{corollary}

\begin{proof}
(\ref{EQN_Density_Rayleigh_Sir}) can be directly derived from
(\ref{EQN_Sir_Rayleigh}).
\end{proof}

\begin{proposition}\label{PRO_Density_Rayleigh_Sinr}

    \emph{Let sensor nodes randomly located with intensity $\lambda_s$
    and served by the nearest data collectors. It is assumed that all
    links experience Rayleigh fading with unit mean and $\alpha=4$.
    The sufficient condition of the intensity of data
    collectors randomly deployed, $\lambda_c$, for keeping the outage
    probability below $\varepsilon_t$, is}
    \begin{equation}\label{EQN_Density_Rayleigh_Sinr}
        \begin{array}{ll}
            \lambda_c \geq \frac{K \beta_t^{\frac{1}{2}}}{2\pi
            \varepsilon_t}\left[(1-2\varepsilon_t) + \sqrt{1 +
            8\varepsilon_t(1-\varepsilon_t)\frac{\tilde{\sigma}^2}{(K\lambda_s)^2}}\right]\lambda_s
        \end{array}
    \end{equation}
    \emph{where $K=\frac{\pi^2}{2}$.}

\end{proposition}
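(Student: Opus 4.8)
The plan is to translate the outage requirement into the exact success-probability formula already established for $\alpha=4$, and then to replace the intractable complementary error function by a sharp rational lower bound, which collapses the transcendental condition into a quadratic inequality solvable in closed form. First I would write the outage probability as $\varepsilon = \Pr\{{\rm SINR}<\beta_t\}=1-\Pr\{{\rm SINR}>\beta_t\}$, so that the design target $\varepsilon\le\varepsilon_t$ is equivalent to $\Pr\{{\rm SINR}>\beta_t\}\ge 1-\varepsilon_t$. Into the left side I would substitute the closed form of Proposition~\ref{PRO_Sinr_Rayleigh}, i.e.\ (\ref{EQN_Sinr_Rayleigh}), and abbreviate $a=\frac{\pi\lambda_c+K\beta_t^{1/2}\lambda_s}{2\sqrt{\beta_t\tilde{\sigma}^2}}$ together with $z=\pi\lambda_c+K\beta_t^{1/2}\lambda_s$, so that $\Pr\{{\rm SINR}>\beta_t\}=\frac{\pi^{3/2}\lambda_c}{2\sqrt{\beta_t\tilde{\sigma}^2}}e^{a^2}{\rm erfc}(a)$ with $a=z/(2\sqrt{\beta_t\tilde{\sigma}^2})$.

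The main obstacle is that this expression is transcendental in $\lambda_c$ through the factor $e^{a^2}{\rm erfc}(a)$ and cannot be inverted exactly; this is precisely why the statement asserts only a \emph{sufficient} (rather than necessary and sufficient) condition, in contrast with Corollary~\ref{COR_Density_Rayleigh_Sir}. I would resolve this with the lower bound ${\rm erfc}(a)\ge \frac{2}{\sqrt{\pi}}\,\frac{a\,e^{-a^2}}{2a^2+1}$ valid for all $a\ge 0$ (which holds here since $\lambda_c,\lambda_s\ge 0$). To establish it I would integrate by parts, using the identity $\frac{d}{dt}\left[-\frac{e^{-t^2}}{2t}\right]=e^{-t^2}\left(1+\frac{1}{2t^2}\right)$ to get $\int_a^\infty e^{-t^2}\left(1+\frac{1}{2t^2}\right)dt=\frac{e^{-a^2}}{2a}$, and then bound $1+\frac{1}{2t^2}\le 1+\frac{1}{2a^2}$ on $t\ge a$, which yields $\int_a^\infty e^{-t^2}\,dt\ge \frac{a\,e^{-a^2}}{2a^2+1}$. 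Because this is a genuine lower bound, replacing ${\rm erfc}$ by it can only decrease $\Pr\{{\rm SINR}>\beta_t\}$, so any $\lambda_c$ satisfying the resulting inequality is automatically sufficient for the true probability.

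Next I would plug the bound in and simplify. The prefactor collapses to $\frac{\pi\lambda_c}{\sqrt{\beta_t\tilde{\sigma}^2}}$, and since $2a^2+1=(z^2+2\beta_t\tilde{\sigma}^2)/(2\beta_t\tilde{\sigma}^2)$ the factors of $\sqrt{\beta_t\tilde{\sigma}^2}$ cancel, giving the clean estimate $\Pr\{{\rm SINR}>\beta_t\}\ge \frac{z\,(z-K\beta_t^{1/2}\lambda_s)}{z^2+2\beta_t\tilde{\sigma}^2}$, where I used $\pi\lambda_c=z-K\beta_t^{1/2}\lambda_s$. Imposing that this lower bound be $\ge 1-\varepsilon_t$ and clearing the strictly positive denominator turns the requirement into the quadratic inequality $\varepsilon_t z^2-K\beta_t^{1/2}\lambda_s\,z-2(1-\varepsilon_t)\beta_t\tilde{\sigma}^2\ge 0$ in the single variable $z$.

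Finally I would solve this quadratic. Since the leading coefficient $\varepsilon_t>0$ and the constant term is nonpositive, the product of the roots is nonpositive, so there is a unique positive root $z_+$ and the inequality holds exactly when $z\ge z_+$. Computing $z_+=\frac{K\beta_t^{1/2}\lambda_s+\sqrt{(K\beta_t^{1/2}\lambda_s)^2+8\varepsilon_t(1-\varepsilon_t)\beta_t\tilde{\sigma}^2}}{2\varepsilon_t}$, substituting back $z=\pi\lambda_c+K\beta_t^{1/2}\lambda_s$, solving for $\lambda_c$, and pulling $K\beta_t^{1/2}\lambda_s$ out of the square root (using $\beta_t\tilde{\sigma}^2/(K\beta_t^{1/2}\lambda_s)^2=\tilde{\sigma}^2/(K\lambda_s)^2$) produces exactly the stated bound (\ref{EQN_Density_Rayleigh_Sinr}). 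As a consistency check I would let $\tilde{\sigma}\to 0$ and verify that both the inner square root and the factor reduce to the $\alpha=4$ specialization of Corollary~\ref{COR_Density_Rayleigh_Sir}, confirming the two results agree in the interference-limited limit.
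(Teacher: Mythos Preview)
Your proposal is correct and follows essentially the same route as the paper: substitute the closed-form success probability of Proposition~\ref{PRO_Sinr_Rayleigh}, apply the rational lower bound $e^{a^2}{\rm erfc}(a)\ge \frac{2}{\sqrt{\pi}}\frac{a}{1+2a^2}$, reduce to a quadratic, and take its positive root. The only cosmetic differences are that you work in the shifted variable $z=\pi\lambda_c+K\beta_t^{1/2}\lambda_s$ (which yields a tidier quadratic than the paper's version in $\lambda_c$, though the two are equivalent by substitution), and you supply an explicit derivation of the ${\rm erfc}$ bound via integration by parts, whereas the paper simply invokes it.
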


\begin{proof}
See Appendix~\ref{APP_PRO_Density_Rayleigh_Sinr}.
\end{proof}

The condition of $\lambda_c$ in (\ref{EQN_Density_Rayleigh_Sir})
under interference-limited environments is necessary and
sufficient while the condition in
(\ref{EQN_Density_Rayleigh_Sinr}) under environments with
non-neglectable noise is just sufficient. In fact,
(\ref{EQN_Density_Rayleigh_Sinr}) has been derived from a lower
bound of the complementary error function. But, for small values
of $\tilde{\sigma}^2$, (\ref{EQN_Density_Rayleigh_Sinr}) also
gives a tight lower bound of $\lambda_c$, and in particular,
(\ref{EQN_Density_Rayleigh_Sinr}) is the same as
(\ref{EQN_Density_Rayleigh_Sir}) with $\alpha=4$ when
$\tilde{\sigma}^2\rightarrow 0$.

Here, given $\beta_t$, $\varepsilon_t$, $\sigma^2$ and
$\lambda_s$, the design method of the transmit power ($P$) of
sensor nodes and the intensity ($\lambda_c$) of data collectors is
suggested, for a path loss exponent of four. The relations among
these variables are given by (\ref{EQN_Sinr_Rayleigh}), but it is
not easy to use (\ref{EQN_Sinr_Rayleigh}) directly for the design
of $P$ and $\lambda_c$. On the contrary, the lower bound of the
CCDF of SINR with a simpler form of
(\ref{EQN_PROOF_PRO_Density_Rayleigh_Sinr_LB}) can give a simple
design method for them. The lower bound of SINR CCDF in
(\ref{EQN_PROOF_PRO_Density_Rayleigh_Sinr_LB}) is equivalent to
the intensity condition of data collectors of
(\ref{EQN_Density_Rayleigh_Sinr}) and the second term within a
square root in (\ref{EQN_Density_Rayleigh_Sinr}) approximately
models the effect of noise. Now, the transmit power and the
intensity of data collectors can be separately designed. First,
for neglecting the noise effect, the second term within the square
root of (\ref{EQN_Density_Rayleigh_Sinr}) has to be much smaller
that $1$. The definition of $\tilde{\sigma}^2$ gives a condition
of the transmit power.
\begin{equation}\label{EQN_TxPw_Condition}
    \begin{array}{ll}
        8\varepsilon_t (1-\varepsilon_t) \frac{\sigma^2/P}{(K
        \lambda_s)^2} \stackrel{\mathrm{(a)}}{\leq}
        \frac{8\sigma^2/P}{(\pi^2
        \lambda_s)^2} \ll 1
    \end{array}
\end{equation}
where (a) follows from the inequality of arithmetic and geometric
means and the definition of $K$. Thus, the transmit power can be
set to
\begin{equation}\label{EQN_TxPw_Design}
    \begin{array}{ll}
        P = c \cdot \frac{8 \sigma^2}{\pi^4\lambda_s^2}
    \end{array}
\end{equation}
where $c$ is a constant much less than one and it is a design
parameter. $c$ has to be set not only to neglect the noise power
but also to keep transmit power as small as possible for sensor
node's power saving. Next, the intensity of data collectors can be
designed according to (\ref{EQN_Density_Rayleigh_Sir}) because the
intensity condition (\ref{EQN_Density_Rayleigh_Sinr}) is almost
equal to (\ref{EQN_Density_Rayleigh_Sir}) if $P$ is set by
(\ref{EQN_TxPw_Design}). In this design, the transmit power is
reciprocally proportional to $\lambda_s^2$ and this means that the
longer the distance among sensor nodes is, the larger the required
transmit power is, because of the noise effect, when the intensity
of data collector is determined by
(\ref{EQN_Density_Rayleigh_Sir}). Even though this design method
is very simple, but it gives a good design method for the random
deployment of data collectors to serve randomly distributed
wireless sensors. Its performances will be shown in
Section~\ref{SEC_Results}.

In interference-limited environments with Rayleigh fading
channels, Corollary~\ref{COR_Density_Rayleigh_Sir} shows the
effect of the path loss exponents obviously. Because the function
$\frac{x}{\sin x}$ is a increasing function of $0<x<\pi$, it is
obvious that the required density of data collectors decreases as
the path loss exponent increases, when $\alpha > 2$ and $\beta_t
\geq 1$ for given $\varepsilon_t$ and $\lambda_s$, from the
definition of $C(1,\alpha)$ and (\ref{EQN_Density_Rayleigh_Sir}).

On the contrary, it is not easy to express the required intensity
of date collectors in case of the Nakagami-$m$ fading with general
$m$'s, in a simple form. Here, the effect of wireless channels on
system designs is analyzed by comparing the performances to those
of Rayleigh fading, rather than deriving their requirements
exactly, only when $\tilde{\sigma}^2\rightarrow 0$. When deploying
data collectors with intensity $\lambda_{c,o}$ for randomly
distributed sensor nodes with intensity $\lambda_s$, let
$\varepsilon^{(o)}$ and $\varepsilon^{(m)}$ denote the outage
probabilities for the Rayleigh fading model and the another
examined-fading model for the required SIR $\beta_t$,
respectively. First, in case of reference channel model assuming
the Rayleigh fading, the $\lambda_{c,o}$ and $\varepsilon^{(o)}$
have the following relation from (\ref{EQN_Sinr_Rayleigh}).
\begin{equation}\label{EQN_Density_Effect_Rayleigh}
    \begin{array}{ll}
        \lambda_{c,o} = \frac{1}{
        \varepsilon^{(o)}}(1-\varepsilon^{(o)})C(1,\alpha)\beta_t^{\frac{2}{\alpha}}\lambda_s
    \end{array}
\end{equation}
On the other hand, in case of the examined-fading channel,
$\tilde{\lambda}_{c,m}$ is defined as
\begin{equation}\label{EQN_Density_Effect_Nakagami}
    \begin{array}{ll}
        \tilde{\lambda}_{c,m} = \frac{1}{
        \varepsilon^{(m)}}(1-\varepsilon^{(m)})C(1,\alpha)\beta_t^{\frac{2}{\alpha}}\lambda_s
    \end{array}
\end{equation}
where $\varepsilon^{(m)}$ is derived from
(\ref{EQN_Sir_Nakagami}). In other words,
(\ref{EQN_Density_Effect_Nakagami}) means that the deployment of
data collectors with $\lambda_{c,o}$ in the Nakagami-$m$ fading
channel is equal to the deployment of data collectors with
$\tilde{\lambda}_{c,m}$ in the Rayleigh fading channel in term of
outage probability. Hence, $\tilde{\lambda}_{c,m}/\lambda_{c,o}$
quantifies the effect of wireless fading channels on the system
design and is simplified from (\ref{EQN_Density_Effect_Rayleigh})
and (\ref{EQN_Density_Effect_Nakagami}), as follows.
\begin{equation}\label{EQN_Density_Gain}
    \begin{array}{ll}
        \frac{\tilde{\lambda}_{c,m}}{\lambda_{c,o}} =
        \frac{\varepsilon^{(o)}(1-\varepsilon^{(m)})}{\varepsilon^{(m)}(1-\varepsilon^{(o)})}
        = \frac{1/\varepsilon^{(m)}-1}{1/\varepsilon^{(o)}-1}
    \end{array}
\end{equation}

\section{Numerical Results and Discussion} \label{SEC_Results}

This section evaluates and discusses the performance of systems
with data collectors randomly deployed to serve randomly
distributed wireless sensors, based on results of
Section~\ref{SEC_Sinr} and Section~\ref{SEC_Density}. It is
assumed that the total intensity of sensor nodes
($\lambda_{s,total}$) spatially distributed according to a
homogeneous PPP is $10^{-2} {\rm m}^{-2}$. Also, $\rho$ is set to
$10^{-4}$ and it means that the sensor nodes awake on average
every $1000$ sec (about 17 minutes), when they transmit data to
data collectors during $100$ msec on each awake mode. Also, the
minimal SINR value ($\beta_t$) required for the successful
reception of $0$ dB is considered.

Fig.~\ref{FIG_SinrCdf_Rayleigh} shows the CDF of SINR according to
$P$ and $\frac{\lambda_c}{\lambda_s}$. This can be interpreted as
the outage probability for $\beta_t$ which is a value on x-axis.
$P/\sigma^2$'s (or $1/\tilde{\sigma}^2$) of $100$ dB and $120$ dB
are assumed. These values mean that the transmit powers of sensor
nodes are $-10$ dBm ($0.1$ mW) and $10$ dBm ($10$ mW), when the
power spectral density of the noise is $-170$ dBm/Hz and the
bandwidth is $1$ MHz. Fig.~\ref{FIG_SinrCdf_Rayleigh} indicates
that analysis results in (\ref{EQN_Sir_Rayleigh}) and
(\ref{EQN_Sinr_Rayleigh}) definitely coincide with the simulation
results. When $P/\sigma^2$ is $100$ dB, the
$\frac{\lambda_c}{\lambda_s}$'s of $10$ and $20$ result in the
outage probabilities of $0.23$ and $0.1$, respectively. As
$P/\sigma^2$ increases, outage probability decreases. In other
words, larger intensity of data collectors and higher transmit
power lead to less outage probability.
Fig.~\ref{FIG_EffectNoise_Rayleigh} and
Fig.~\ref{FIG_EffectTxPower_Rayleigh} explain these effects more
quantitatively. In Fig.~\ref{FIG_EffectNoise_Rayleigh}, the outage
probability decreases as the intensity of data collectors
increases, and their required intensity can be obtained for a
given outage probability. Also its lower bound by
(\ref{EQN_Density_Rayleigh_Sinr}) is shown. The lower bound of
$\lambda_c$ in (\ref{EQN_Density_Rayleigh_Sinr}) is tighter when
the effect of noise is reduced. The effect of noise on outage
probability decreases as $\frac{\lambda_c}{\lambda_s}$ increases.
This is because the increase of $\frac{\lambda_c}{\lambda_s}$
leads to the increase of received SNR because of the decrease in
distances between data collectors and sensor nodes.
Fig.~\ref{FIG_EffectTxPower_Rayleigh} shows how the transmit power
of sensor nodes affect the outage probability. The results of
Fig.~\ref{FIG_EffectTxPower_Rayleigh} were evaluated by changing
the intensity of sensor nodes for given relative intensities of
data collectors. In other words, it shows the effect of noise by
changing the geometric size of networks. The larger geometric
size, i.e. larger distances between sensor nodes and data
collectors, leads to the bigger effects of noise on the system
performance. These results also verify that the design of the
transmit power not only reduces the noise effect but also keeps
the transmit power as small as possible. Also, under the
environments of Fig.~\ref{FIG_SinrCdf_Rayleigh}, the design by
(\ref{EQN_TxPw_Design}) with $c=0.1$ provides the transmit power
of $9$ dBm (i.e. $P/\sigma^2=119$ dB), and it is observed that
$P/\sigma^2=120$ dB approaches the performance of $P/\sigma^2
\rightarrow \infty$ in Fig.~\ref{FIG_SinrCdf_Rayleigh}. These
results confirm that (\ref{EQN_TxPw_Design}) is a very efficient
design method. The path loss exponent is another crucial factor to
have an effect on system performances. As
Fig.~\ref{FIG_SinrCdf_Rayleigh_PathlossExp} indicates, they result
in very different performance for the same transmit power. At
$P/\sigma^2 = 100$ dB, the noise can be neglected in case of a
pathloss exponent $3$ while it causes severe performance
degradation in case of a pathloss exponent $5$. By contrast, when
the noise effect can be neglected, larger $\alpha$'s result in
less outage probability for given $\beta_t$ and $\lambda_s$. In
fact, for given $\varepsilon_t$ and $\lambda_s$, the required
$\lambda_c$ for a pathloss exponent of $\alpha'$ increases by the
factor of $\frac{C(1,\alpha')}{C(1,\alpha)}
\beta_t^{(\frac{2}{\alpha'}-\frac{2}{\alpha})}$, compared to a
pathloss exponent of $\alpha$ when $P/\sigma^2 \rightarrow 0$,
where $C(1,\alpha)$ is defined in
Corollary~\ref{COR_Density_Rayleigh_Sir}. For example, when
$\beta_t = 0~\rm{dB}$, the path loss exponents of $3$ and $5$
requires $1.54$ and $0.84$ times of the intensity of data
collectors for the path loss exponent of $4$.

Fig.~\ref{FIG_SinrCdf_Nakagami} -
Fig.~\ref{FIG_EffectChannel_Density} examine the performance for
Nakagami-$m$ fading channels. Fig.~\ref{FIG_SinrCdf_Nakagami}
explains how the line-of-sight factors of fading channels
contribute to the SINR distribution. The increase in $m$ results
in the decrease in outage probability. But, $m$ more than two does
not have an big effect on the performance, compared to $m$ equal
to two. Fig.~\ref{FIG_SinrCdf_Nakagami} also indicates that
analysis results exactly coincide with simulation results when
considering that the performance of $P/\sigma^2 = 120$ dB is as
good as that of $P/\sigma^2 \rightarrow \infty$.
Fig.~\ref{FIG_EffectChannel_Nakagami} shows the effect of channels
on outage probability under the interference-limited environments.
The outage probability decreases as $m$ and the pathloss exponent
increase. It means that the Rician fading and AWGN environments
need less intensity of data collectors than the Rayleigh fading
environments for the same path loss exponent. Moreover, from this
figure, the intensity of data collectors required to meet a
certain outage probability can be obtained.
Fig.~\ref{FIG_EffectChannel_Density} examines the relative effect
of other fading channels compared to the Rayleigh fading channel
in term of the intensity of data collectors, which is defined in
(\ref{EQN_Density_Gain}). it shows that $m$ and $\alpha$ has a big
effect on the system design such as the deployment of data
collectors.


So far, this paper analyzed and discussed the effect of the
wireless channels, the transmit power and the intensity of data
collectors on system performances when data collectors are
randomly deployed to successfully collect the data from
randomly-located sensor nodes. As the number of wireless nodes
increases enormously in future, it is more and more difficult to
design the system. For reducing these difficulties, efficient
system design methods is required to deal with a huge number of
wireless nodes, so the rigorous understanding about the spatial
distribution and effect of interference will be basics for them.
Even though this paper has considered only simple random access,
these results will be able to be used as basic models for
developing more sophisticated spatial resource management methods.

\section{Conclusions} \label{SEC_Conclusions}

This paper has considered the environment where receivers (data
collectors) as well as transmitters (sensor nodes) are randomly
deployed and each transmitter is served by the receiver nearest to
it. In network topology modeled by homogeneous Poisson point
processes, analysis and simulation results showed the SINR
distribution, and a simple design method of transmit power was
suggested. Under interference-limited environments, the larger the
path loss exponent and the portion of line-of-site factors were,
the less the outage probability was. On the contrary, under
non-neglectable noise environments, the large path loss exponent
caused severe performance degradation. Moreover, the intensity of
data collectors required to keep the outage probability above a
certain value was derived, and it depends on required outage
probability, an intensity of sensor nodes, a fading channel model,
a path loss exponent and noise power. This required intensity
helps to design such parameters as the amount of wireless
resources and the access probability for medium access control.
Random access scheme is very simple and does not cause
control-overhead problems even under environments with a huge
number of sensor nodes, but its required intensity of data
collectors is never small. Thus, it is needed to find more
sophisticated spatial resource management schemes and the result
of this paper may be used as a basic model for them.

\appendices

\section{Proof of Lemma~\ref{LEM_Sinr_GneralFading}} \label{APP_LEM_Sinr_GneralFading}

This proof is similar to proof of theorem 1 in
\cite{REF_CellNakagami_Andrews} that has considered the
transmitter-centric coverage (or downlink) and only the
transmitter intensity. Here, an analysis focuses on the
receiver-centric coverage by data collectors (or uplink) and
allows that multiple transmitters within the service area of a
common data collector simultaneously transmit. For those
differences and the completeness, this paper provides the full
derivation of the CCDF of SINR.

The probability that there is a data collector at a distance of
$r$ from a typical sensor node is $2\pi \lambda_c dr$. For this
data collector to be a serving data collector of a typical sensor
node, all other data collectors must be farther than $r$ from a
typical sensor node, and its probability is $\exp(-\lambda_c \pi
r^2)$. Thus, the probability density function of the distance
between a typical sensor node and its serving data collector,
$f_r(r)$, is equal to $2 \pi \lambda_c r \cdot \exp(-\lambda_c \pi
r^2)$.

The CCDF of SINR is
\begin{equation}\label{EQN_PROOF_LEM11_SINR}
    \begin{array}{ll}
        \Pr\{{\rm SINR}>\beta\} \\
        = \int_{0}^{\infty}
        \Pr\left\{\frac{r^{-\alpha}G_S}
        {I_r + \tilde{\sigma}^2} >
        \beta\right\} f_r(r) dr \\
        = 2\pi\lambda_c\int_{0}^{\infty} \Pr\left\{G_S >
        \beta r^{\alpha} (I_r + \tilde{\sigma}^2)\right\} r \exp(-\lambda_c \pi
        r^2) dr
    \end{array}
\end{equation}
where $I_r = \sum_{X_j \in \Phi_s \backslash \{X_0\}}
\left|X_j\right|^{-\alpha}G_{I,j}$. From
(\ref{EQN_GeneralFading}),
\begin{equation}\label{EQN_PROOF_LEM11_Gs_CCDF}
    \begin{array}{ll}
        \Pr\left\{G_S > \beta r^{\alpha} (I_r +
        \tilde{\sigma}^2)\right\} \\
        = \mathrm{E}_{I_r}\left\{ \sum_{n \in \mathcal{N}}
        \exp(-n\beta r^\alpha [I_r + \tilde{\sigma}^2]) \cdot \right. \\
        \hspace{2.5cm} \left. \sum_{k \in \mathcal{K}} a_{nk}(\beta r^\alpha [I_r + \tilde{\sigma}^2])^k
        \right\} \\
        = \sum_{n \in \mathcal{N}}\sum_{k \in \mathcal{K}} a_{nk}
        (\beta r^\alpha)^k \cdot \\
        \hspace{2.5cm}\mathrm{E}_{I_r}\left\{ (I_r + \tilde{\sigma}^2)^k \exp(-n\beta r^\alpha [I_r + \tilde{\sigma}^2])
        \right\} \\
        \stackrel{\mathrm{(a)}}{=} \sum_{n \in \mathcal{N}}\sum_{k \in \mathcal{K}} a_{nk}
        (-\beta r^\alpha)^k \left.\frac{d^k \mathrm{E}\{\exp\left(-\zeta(I_r+\tilde{\sigma}^2)\right)\}}
        {d\zeta^k} \right|_{\zeta=n \beta r^\alpha} \\
        \stackrel{\mathrm{(b)}}{=} \sum_{n \in \mathcal{N}}\sum_{k \in \mathcal{K}} a_{nk}
        (-\beta r^\alpha)^k \left.\frac{d^k \mathcal{L}_{I_r} (\zeta) \exp(-\zeta \tilde{\sigma}^2)}
        {d\zeta^k} \right|_{\zeta=n \beta r^\alpha}
    \end{array}
\end{equation}
where (a) and (b) follow from the definition of Laplace transform,
$\mathcal{L}_{X}(\zeta)=\mathrm{E}_{X}\{\exp(-\zeta X)\}$, its
property, $\mathcal{L}_{t^k X(t)}(\zeta)=(-1)^k \frac{d^k
\mathcal{L}_{X}(\zeta)}{d \zeta^k}$, and the independence of $I_r$
and $\tilde{\sigma}^2$. The Laplace transform of $I_r$ is
\begin{equation}\label{EQN_PROOF_LEM11_L_Ir}
    \begin{array}{ll}
        \mathcal{L}_{I_r}(\zeta)& =\mathrm{E}_{I_r}\{\exp(-\zeta
        I_r)\} \\
        & = \mathrm{E}_{\Phi_s,G_{I}}\{ \exp(-\zeta \sum_{X_j \in \Phi_s \backslash \{X_0\}}
        \left|X_j\right|^{-\alpha}G_{I,j}) \} \\
        & = \mathrm{E}_{\Phi_s}\{\prod_{X_j \in \Phi_s \backslash \{X_0\}}
        \mathrm{E}_{G_{I,j}}\{ \exp(-\zeta G_{I,j} \left|X_j\right|^{-\alpha}) \}
        \} \\
        & \stackrel{\mathrm{(c)}}{=} \exp\left( -2\pi\lambda_s \int_{0}^{\infty} [1-
        \mathrm{E}_{G_I}\{ \exp(-\zeta G_{I} v^{-\alpha} )\}] v dv
        \right) \\
        & \stackrel{\mathrm{(d)}}{=} \exp\left( -2\pi\lambda_s \cdot \right. \\
        & \hspace{1.3cm} \left. \int_{0}^{\infty} \left(\int_{0}^{\infty}[1-
        \exp(-\zeta v^{-\alpha} g] v dv \right) f_{G_I}(g) dg
        \right) \\
        & \stackrel{\mathrm{(e)}}{=} \exp\left( -\frac{2\pi\lambda_s
        \zeta^{\frac{2}{\alpha}}}{\alpha} \Gamma\left(-\frac{2}{\alpha}\right)
        \int_{0}^{\infty} g^{\frac{2}{\alpha}} f_{G_I}(g) dg
        \right) \\
        & \stackrel{\mathrm{(f)}}{=} \exp\left( -\lambda_s \xi(\zeta,\alpha) \right)
    \end{array}
\end{equation}
where (c) follows from the probability generating functional
(PGFL) of the PPP \cite{REF_StochasticGeometry_Book}; (d) uses the
probability density function $f_{G_I}(g)$ of a random variable
$G_I$; (e) follows from the change of variable $v^{-\alpha}
\rightarrow x$ and the definition of the Gamma function; (f)
follows from the property of Gamma function $x
\Gamma(x)=\Gamma(1+x)$ and the definition of $\xi(\zeta,\alpha)$.

By substituting (\ref{EQN_PROOF_LEM11_Gs_CCDF}) and
(\ref{EQN_PROOF_LEM11_L_Ir}) into (\ref{EQN_PROOF_LEM11_SINR}),
(\ref{EQN_Sinr_GeneralFading}) is derived.

Also, (\ref{EQN_Sinr_GeneralFading_Derivative}) is obtained from
the following equation which can be derived by the derivative of
the exponential function and the chain rules:
\begin{equation}\label{EQN_ExpComposite_Derivative}
    \begin{array}{ll}
        \frac{\partial^k} {\partial z^k} \exp\left(f(z)\right) =
        \\ \exp \left(f(z)\right) \sum_{l=0}^{k}
        \frac{1}{l!}\sum_{j=0}^{l}(-1)^{j}\binom{l}{j} f(z)^j \frac{\partial^k f(z)^{l-j}}{\partial z^k}
    \end{array}
\end{equation}
where $\binom{l}{j}$ denotes $\frac{l!}{j!(l-j)!}$.

\section{Proof of Proposition~\ref{PRO_Sir_Nakagami}} \label{APP_PRO_Sir_Nakagami}

The fading gain of Nakagami-$m$ fading channel given in
(\ref{EQN_NakagamiFading_S}) can be reexpressed as
\begin{equation}\label{EQN_NakagamiFading_S_Reform}
    \Pr\left\{G_S>g\right\}=\sum_{n=m_s}^{m_s}\exp(-n g)\sum_{k=0}^{m_s-1}
    \frac{n^k}{k!}g^k
\end{equation}
So, the Nakagami-$m$ fading is the case of
$\mathcal{N}=\left\{m_s\right\}$,
$\mathcal{K}=\left\{0,\cdots,m_s-1\right\}$ and $a_{nk} =
\frac{n^k}{k!}$. Thus,
\begin{equation}\label{EQN_Xi_Sir_Nakagami}
        \begin{array}{ll}
            \xi\left(\zeta, \alpha \right) & = \pi \zeta^{\frac{2}{\alpha}} \Gamma \left(1-\frac{2}{\alpha}\right)
            \int_{0}^{\infty} g^{\frac{2}{\alpha}}\cdot \frac{g^{m_i-1}\exp(-m_i g)}{m_i^{-m_i}\Gamma(m_i)}dg \\
            & = \pi \zeta^{\frac{2}{\alpha}} C(m_i,\alpha)
        \end{array}
    \end{equation}
where $C(m,\alpha)$ is defined as
$\frac{m^{-\frac{2}{\alpha}} \Gamma\left(1-\frac{2}{\alpha}\right)
\Gamma\left(m+\frac{2}{\alpha}\right)}{\Gamma\left(m\right)}$.

When $\tilde{\sigma}^2 \rightarrow 0$, the derivative
(\ref{EQN_Sinr_GeneralFading_Derivative}) is calculated into
\begin{equation}\label{EQN_Sir_Nakagami_Derivative}
    \begin{array}{ll}
        \frac{d^k \exp\left(-\lambda_s \xi(\zeta,\alpha)\right)}
        {d\zeta^k} \\
        = \exp \left(-\lambda_s \pi \zeta^{\frac{2}{\alpha}} C(m_i,\alpha) \right) \sum_{l=0}^{k}
        \frac{1}{l!}\sum_{j=0}^{l}(-1)^{l+j}\binom{l}{j} \\
        \hspace{1cm}\left[\lambda_s \pi \zeta^{\frac{2}{\alpha}} C(m_i,\alpha)
        \right]^j \frac{\partial^k}{\partial \zeta^k} \left[\lambda_s \pi \zeta^{\frac{2}{\alpha}} C(m_i,\alpha)
        \right]^{l-j} \\
        = \exp \left(-\lambda_s \pi \zeta^{\frac{2}{\alpha}} C(m_i,\alpha) \right) \sum_{l=0}^{k}
        \frac{1}{l!}\sum_{j=0}^{l}(-1)^{l+j}\binom{l}{j} \\
        \hspace{1cm} [-\lambda_s \pi C(m_i,\alpha)]^{l}
        \left[ \prod_{i=0}^{k-1} \left(\frac{2}{\alpha}(l-j)-i\right)
        \right] \zeta^{\frac{2}{\alpha}l-k}
    \end{array}
\end{equation}
From (\ref{EQN_NakagamiFading_S_Reform}) and
(\ref{EQN_Sir_Nakagami_Derivative}),
(\ref{EQN_Sinr_GeneralFading}) is
\begin{equation}\label{EQN_Sir_Nakagami_Proof}
    \begin{array}{ll}
        \Pr\left\{{\rm SIR}>\beta\right\}  \\
        \stackrel{\mathrm{(a)}}{=} 2 \pi \lambda_c \sum_{k=0}^{m_s-1}
        \frac{m_s^k}{k!}(-\beta)^k \\ \hspace{1cm} \int_{0}^{\infty}
        (m_s\beta)^{-k} \sum_{l=0}^{k}
        \frac{(-1)^l}{l!}\left[\lambda_s \pi
        C(m_i,\alpha)(m_s\beta)^{\frac{2}{\alpha}}r^2\right]^{l}
        \cdot \\ \hspace{1cm} \Delta_{k,l} \cdot \exp\left(-[\lambda_s \pi C(m_i,\alpha)(m_s\beta)^{\frac{2}{\alpha}} + \lambda_c
        \pi]r^2\right) r dr \\
        \stackrel{\mathrm{(b)}}{=} 2 \pi \lambda_c \sum_{k=0}^{m_s-1}
        \frac{(-1)^k}{k!} \\ \hspace{1cm} \sum_{l=0}^{k}
        \frac{(-1)^l}{l!} \left[ \lambda_s \pi C(m_i,\alpha)(m_s\beta)^{\frac{2}{\alpha}}
        \right]^l \Delta_{k,l} \\ \hspace{1cm} \int_{0}^{\infty}
        \exp\left( -[\lambda_s \pi C(m_i,\alpha)(m_s\beta)^{\frac{2}{\alpha}} + \lambda_c
        \pi]r^2 \right)r^{2l+1}dr \\
        \stackrel{\mathrm{(c)}}{=} 2 \pi \lambda_c \sum_{k=0}^{m_s-1}
        \frac{(-1)^k}{k!} \\ \hspace{1cm} \sum_{l=0}^{k}
        \frac{(-1)^l}{l!} \left[ \lambda_s \pi C(m_i,\alpha)(m_s\beta)^{\frac{2}{\alpha}}
        \right]^l \Delta_{k,l} \\ \hspace{1cm} \left( \frac{1}{2} \left[ \lambda_s \pi
        C(m_i,\alpha)(m_s\beta)^{\frac{2}{\alpha}}+\lambda_c \pi
        \right]^{-l-1} \Gamma(l+1)\right) \\
        \stackrel{\mathrm{(d)}}{=} \frac{\lambda_c}{\lambda_c + \lambda_s C(m_i, \alpha) (m_s\beta)^{\frac{2}{\alpha}}} \sum_{k=0}^{m_s-1} \frac{1}{k!} \cdot \\ \hspace{1cm}
        \sum_{l=0}^{k}
        (-1)^{k+l} \Delta_{k,l} \left[\frac{\lambda_s C(m_i, \alpha)
        (m_s\beta)^{\frac{2}{\alpha}}}{\lambda_c + \lambda_s C(m_i, \alpha)
        (m_s\beta)^{\frac{2}{\alpha}}}
        \right]^l
    \end{array}
\end{equation}
where (a) follows from the definition of $\Delta_{k,l}$ in
Proposition~\ref{PRO_Sir_Nakagami}, (b) follows from the
interchange of a summation and an integration, (c) follows from
the calculation of the integral part by the definition of the
Gamma function, and (d) follows from the property of the Gamma
function $\Gamma(l+1)=l!$ for a nonnegative integer $l$.

\section{Proof of Proposition~\ref{PRO_Density_Rayleigh_Sinr}} \label{APP_PRO_Density_Rayleigh_Sinr}

Let $\tau=\frac{\pi \lambda_c + K \beta_t^{\frac{1}{2}}
\lambda_s}{2 \sqrt{\beta_t\tilde{\sigma}^2}}$ and $\kappa =
\pi^{\frac{3}{2}}\frac{\lambda_c}{2\sqrt{\beta_t
\tilde{\sigma^2}}}$. (\ref{EQN_Sinr_Rayleigh}) can be expressed as
\begin{equation}\label{EQN_PROOF_PRO_Density_Rayleigh_Sinr_ErfcLB}
    \begin{array}{ll}
    \Pr\left\{{\rm SINR} > \beta \right\}
    = \exp(\tau^2){\rm erfc}(\tau)\kappa \\
    \stackrel{\rm(a)}{>} \exp(\tau^2) \cdot
    \frac{2}{\sqrt{\pi}}\frac{\tau}{1+2\tau^2}\exp(-\tau^2) \cdot
    \kappa \\
    = \frac{2}{\sqrt{\pi}}\frac{\tau}{1+2\tau^2} \cdot \kappa
    \end{array}
\end{equation}
where (a) follows from the lower bound of the complementary error
function. From (\ref{EQN_Sinr_Rayleigh}) and
(\ref{EQN_PROOF_PRO_Density_Rayleigh_Sinr_ErfcLB}),
\begin{equation}\label{EQN_PROOF_PRO_Density_Rayleigh_Sinr_LB}
    \begin{array}{ll}
    \Pr\left\{{\rm SINR} > \beta \right\}
    & \stackrel{\rm(b)}{>} \pi\lambda_c \cdot \frac{\pi \lambda_c + K \beta_t^{\frac{1}{2}}\lambda_s}
    {2\beta_t \tilde{\sigma}^2 + (\pi \lambda_c + K
    \beta_t^{\frac{1}{2}}\lambda_s)^2} \\
    & \geq 1-\varepsilon_t
    \end{array}
\end{equation}
where (b) follows from the definition of $\tau$ and $\kappa$.
(\ref{EQN_PROOF_PRO_Density_Rayleigh_Sinr_LB}) is rewritten into
\begin{equation}\label{EQN_PROOF_PRO_Density_Rayleigh_Sinr_QuadEq}
    \begin{array}{ll}
    \varepsilon_t \pi^2 \lambda_c^2 - [(1-2\epsilon_t) \pi K \beta_t^{\frac{1}{2}}\lambda_s]
    \lambda_c\\
    \hspace{2cm} - (1-\varepsilon_t)(K^2 \beta_t \lambda_s^2 + 2
    \beta_t
    \tilde{\sigma}^2) \geq 0
    \end{array}
\end{equation}
which is a quadratic inequality with the form of
$a\lambda_c^2+b\lambda_c+c \geq 0$ where $a>0$ and $c<0$ for
$0<\varepsilon_t<1$. Thus,
(\ref{EQN_PROOF_PRO_Density_Rayleigh_Sinr_QuadEq}) gives a
positive lower bound of $\lambda_c$. By solving the inequality
(\ref{EQN_PROOF_PRO_Density_Rayleigh_Sinr_QuadEq}) for a variable
$\lambda_c>0$, (\ref{EQN_Density_Rayleigh_Sinr}) is derived.

\appendices







\clearpage
\begin{figure}[p]
\centering
\includegraphics[width=12cm]{./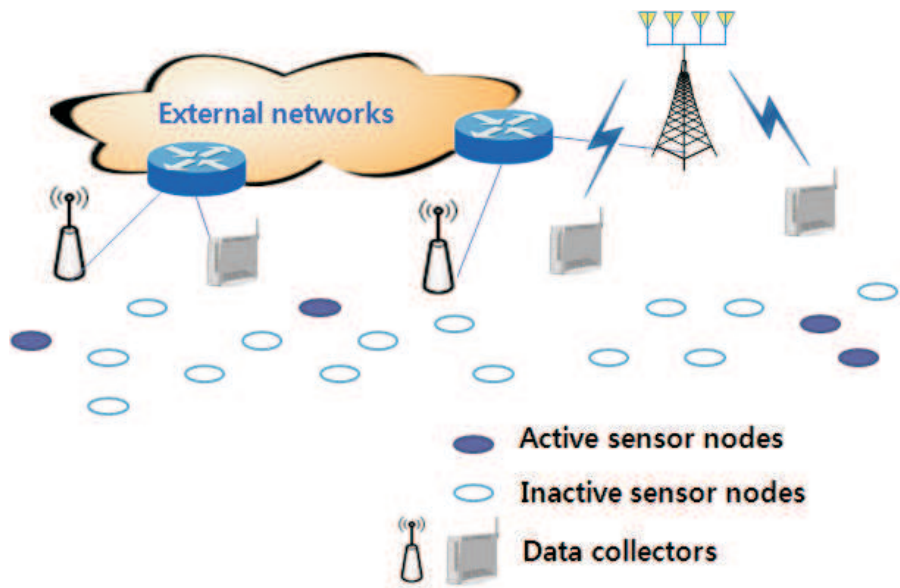}
\caption{Data collectors to collect data from sensor nodes}
\label{FIG_System_Model}
\end{figure}

\begin{figure}[p]
\centering
\includegraphics[width=12cm]{./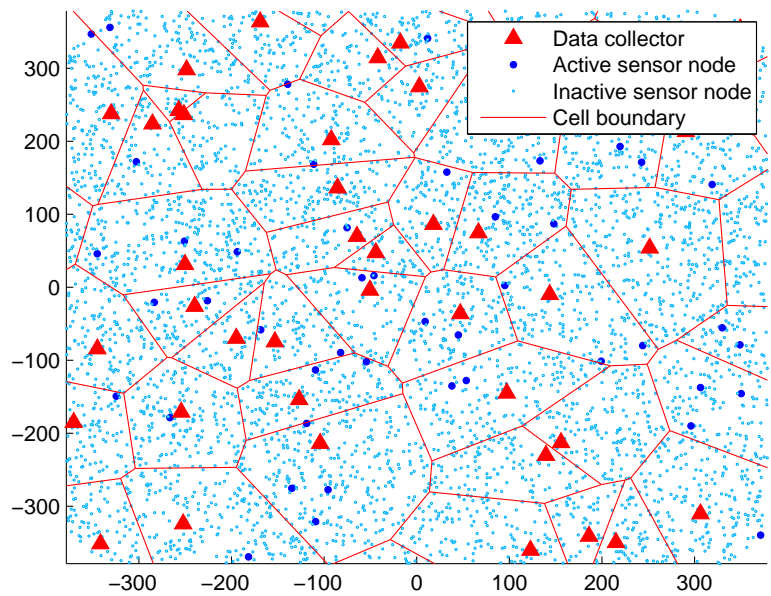}
\caption{Data collectors and sensor nodes distributed by a
homogeneous Poisson point processes. Data collectors build the
Voronoi tessellation ($\lambda_{s,total} = 10^{-2}\,{\rm m}^{-2}$,
$\lambda_c = 5 \times 10^{-3}\,{\rm m}^{-2}$, $\rho = 0.01$)}
\label{FIG_VoroniTessellation}
\end{figure}

\begin{figure}[p]
\centering
\includegraphics[width=12cm]{./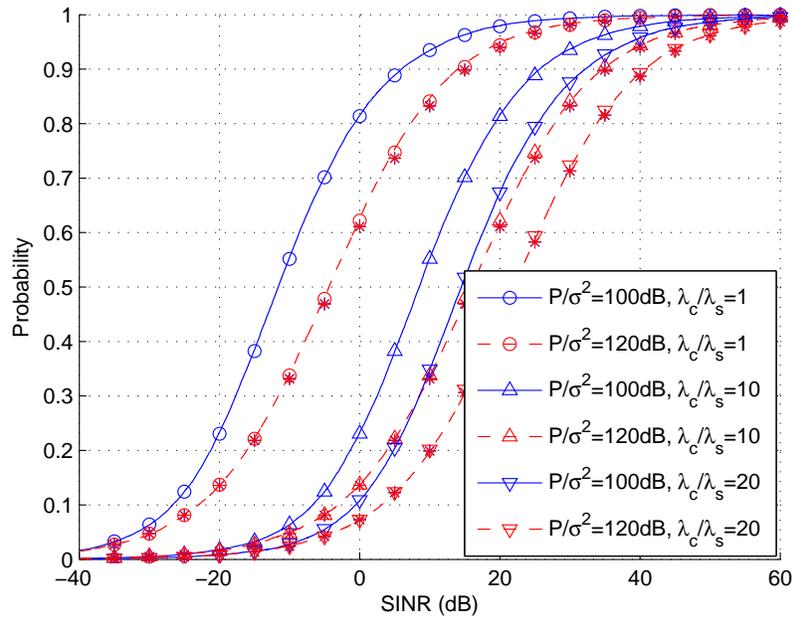}
\caption{CDF of SINR for Rayleigh fading channels ($\alpha = 4$;
lines - simulation results; symbols - analysis results; star
symbols represent the case of $P/\sigma^2 \rightarrow \infty$)}
\label{FIG_SinrCdf_Rayleigh}
\end{figure}

\begin{figure}[p]
\centering
\includegraphics[width=12cm]{./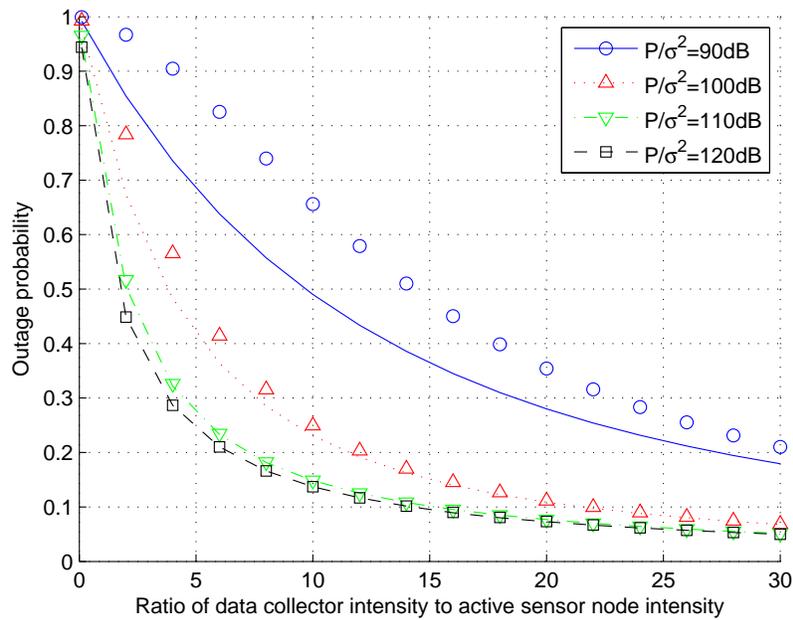}
\caption{Effect of noise power on outage probability ($\alpha =
4$; $\beta_t = 0$ dB; lines and symbols represent the exact
performances and their approximations by
(\ref{EQN_Density_Rayleigh_Sinr}), respectively))}
\label{FIG_EffectNoise_Rayleigh}
\end{figure}

\begin{figure}[p]
\centering
\includegraphics[width=12cm]{./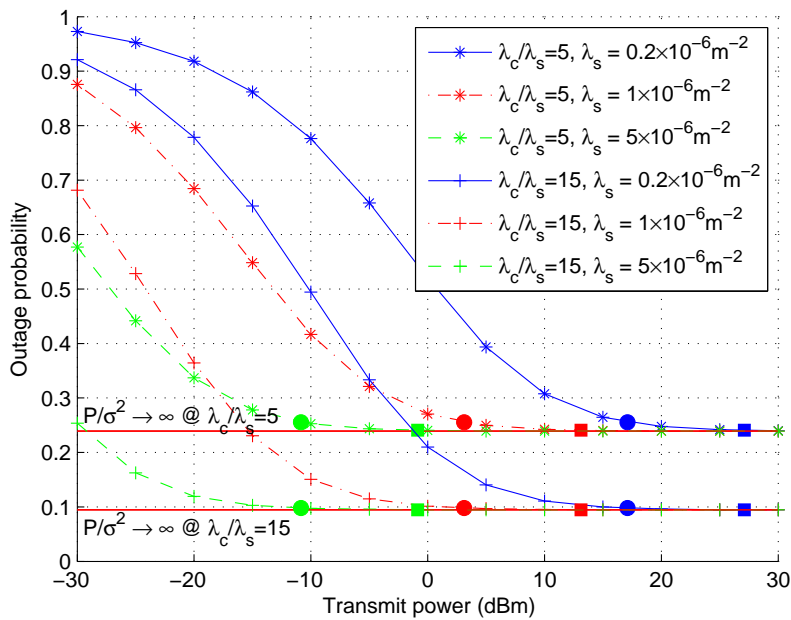}
\caption{Effect of transmit power on outage probability ($\alpha =
4$; $\beta_t = 0$ dB; solid circles and squares represent the
transmit power values by the design of (\ref{EQN_TxPw_Design})
with $c=0.1$ and $c=0.01$, respectively)}
\label{FIG_EffectTxPower_Rayleigh}
\end{figure}

\begin{figure}[p]
\centering
\includegraphics[width=12cm]{./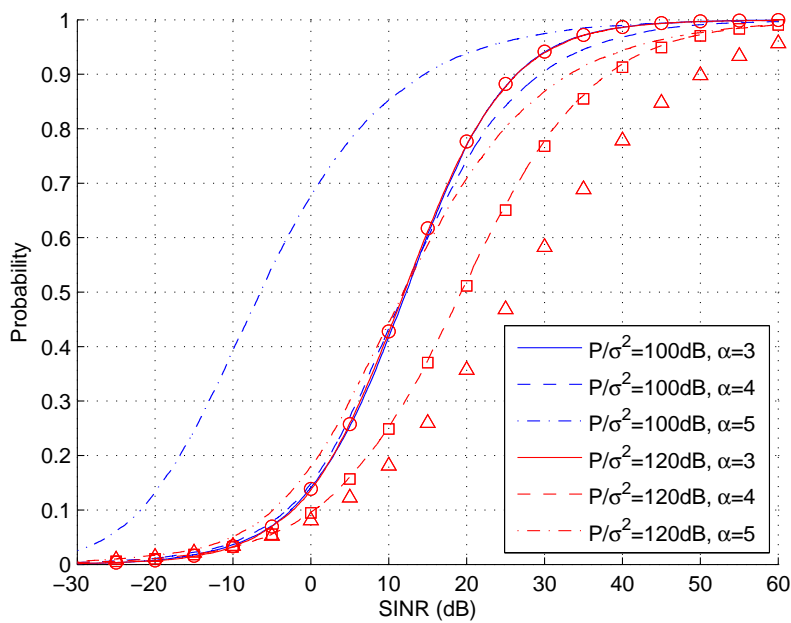}
\caption{CDF of SINR for Rayleigh fading channels according to the
pathloss exponents (lines - simulation results; circles, squares
and triangles represent the case of $P/\sigma^2 \rightarrow
\infty$ for $\alpha=3,4,5$, respectively)}
\label{FIG_SinrCdf_Rayleigh_PathlossExp}
\end{figure}

\begin{figure}[p]
\centering
\includegraphics[width=12cm]{./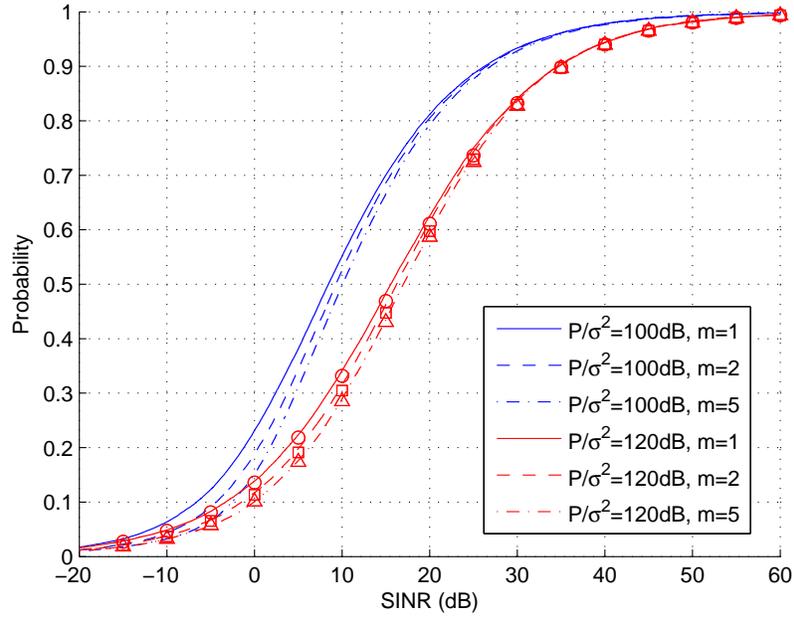}
\caption{CDF of SINR for Nakagami-$m$ fading channels ($\alpha =
4$; lines - simulation results; symbols - analysis results when
$P/\sigma^2 \rightarrow \infty$)} \label{FIG_SinrCdf_Nakagami}
\end{figure}

\begin{figure}[p]
\centering
\includegraphics[width=12cm]{./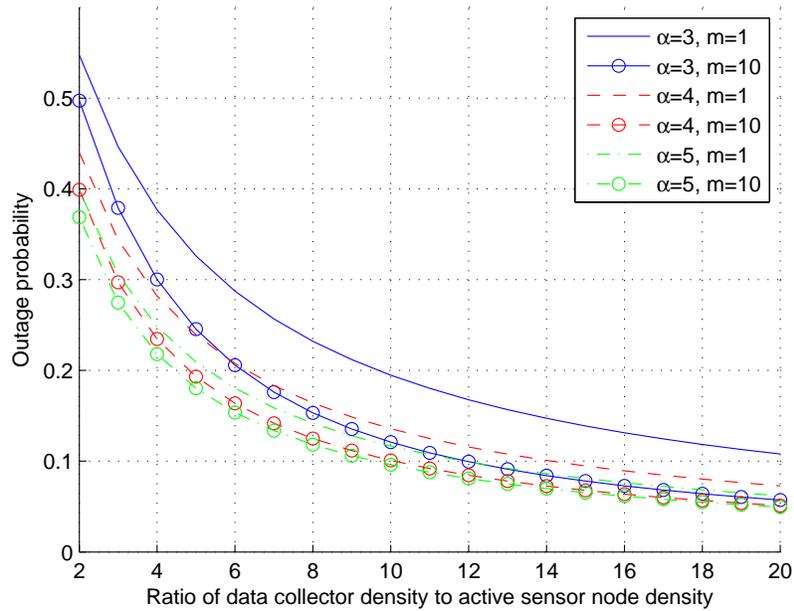}
\caption{Effect of wireless channels on outage probability
($\beta_t = 0$ dB; $P/\sigma^2 \rightarrow \infty$)}
\label{FIG_EffectChannel_Nakagami}
\end{figure}

\begin{figure}[p]
\centering
\includegraphics[width=12cm]{./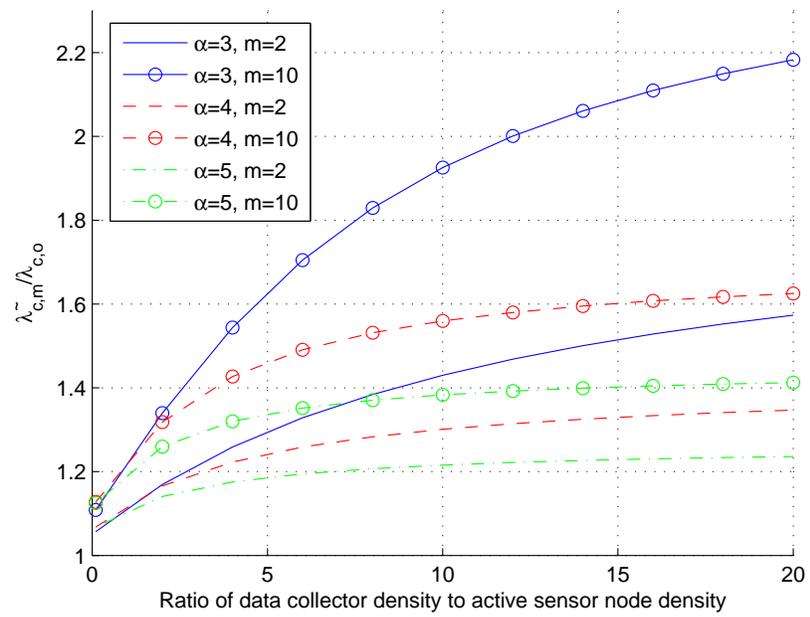}
\caption{Relative effect of wireless channels on the system
designs, compared to Rayleigh fading ($\beta_t = 0$ dB;
$P/\sigma^2 \rightarrow \infty$)}
\label{FIG_EffectChannel_Density}
\end{figure}

\end{document}